\newtheorem{theorem}{Theorem}[section]
\numberwithin{equation}{section}
\begin{document}

\title{Refinements of Barndorff-Nielsen and Shephard model:  an analysis of crude oil price with machine learning}
\author{Indranil SenGupta\footnote{Associate Professor and Graduate Program Director, Department of Mathematics, North Dakota State University, Fargo, North Dakota, USA. Email: indranil.sengupta@ndsu.edu}, William Nganje\footnote{Professor and Department Chair, Department of Agribusiness and Applied Economics at North Dakota State University, Fargo, North Dakota, USA.},  Erik Hanson\footnote{Assistant Professor, Department of Agribusiness and Applied Economics at North Dakota State University, Fargo, North Dakota, USA.}}

\date{\today}

\maketitle

\begin{abstract}

A commonly used stochastic model for derivative and commodity market analysis is the Barndorff-Nielsen and Shephard (BN-S) model. Though this model is very efficient and analytically tractable, it suffers from the absence of long range dependence and many other issues. For this paper, the analysis is restricted to crude oil price dynamics.
A simple way of improving the BN-S model with the implementation of various machine learning algorithms is proposed. This refined BN-S model is more efficient and has fewer parameters than other models which are used in practice as improvements of the BN-S model.  The procedure and the model show the application of data science for extracting a ``deterministic component" out of processes that are usually considered to be completely stochastic. Empirical applications validate the efficacy of the proposed model for long range dependence.

\end{abstract}
\textsc{Key Words:} Machine Learning, Deep Learning, Stochastic Model, L\'evy Processes, Subordinator.\\
\section{Introduction}
\label{sec1}

One of the most prominent tools in modern big data analysis is machine learning. Machine learning is about extracting knowledge from a significantly large data set. The application of machine learning methods has recently become ubiquitous in everyday life. Machine learning has had a tremendous influence on the way data-driven research is done today. The tools can be applied to diverse scientific problems such as understanding stars, finding distant planets, discovering new particles, analyzing DNA sequences, and providing personalized cancer treatments.

A commodity of fundamental importance is crude oil. Consequently an analysis of the dynamics of crude oil price time series seems to be crucial. This allows to ascertain the potential impacts of its shocks in several economies and on other financial assets (see \cite{Tabak}). As observed in \cite{Sensoy}, long-range dependence is evident in various energy futures markets. 
Many other existing works are dedicated to the analysis of the dynamics of crude oil prices. In \cite{Frey}, various econometric models used to forecast crude oil prices are summarized and interpreted. In \cite{oil11}, a deep learning model is applied to crude oil prices and a hybrid crude oil price forecasting model is provided. In  \cite{Sircar1}, oil producers' decisions in Cournot competitions are described through continuum dynamic mean field games. In related work (see \cite{Sircar2}), a modified Hotelling's rule for games with stochastic demand is discussed. In \cite{Michael}, machine learning algorithms are implemented to analyze the oil price dynamics for the Bakken region in the United States.

Paper \cite{Li} uses a convolutional neural network to forecast crude oil prices through online media text mining. Paper \cite{Abd} discusses applications of the hierarchical conceptual model and the artificial neural networks-quantitative model to crude oil prices. In \cite{Zhao}, denoising autoencoders and bootstrap aggregation are combined to forecast crude oil prices. Paper  \cite{He} evaluates the accuracy of machine learning support vector regression models for forecasting crude oil prices. 

The application of machine learning to other financial data is also becoming more common. In \cite{Jiang}, a machine learning algorithm is applied to state-contingent claims and stochastic discount factors in financial markets. In \cite{Kul}, a machine learning algorithm is implemented to determine whether bank-differentiating factors influence firm choices in initial public offerings. In  \cite{Pas}, a multicriteria decision aid model is used in an attempt to replicate the credit ratings of Asian banks.

A commonly used stochastic model for the derivative and commodity market analysis is the Barndorff-Nielsen and Shephard (BN-S) model (see see \cite{BN1, BN-S1, BN-S2, BJS, Issaka, Issaka1, SWW, SWW2}). Though this model is very efficient and simple to use, it suffers from the absence of a long range dependence and many other issues. In this paper, we propose a simple way of improving the BN-S model with the implementation of various machine learning algorithms. After that, we validate the performance of the model. We use staging data sets that are close to production and see how our model behaves; if it gives good results, then the model is deployed and it is implemented. Finally, feedback is used to determine whether the model meets the business need for which it was built.

In this paper, we apply machine learning to the analysis of crude oil price data. In order to understand the data, we collect ten years of daily historical price data for crude oil.
After that, we conduct the exploratory data analysis. In the exploratory data analysis, we look at the basic statistics of the data such as its mean, median, and mode and correlations between the different labels. This exploratory data analysis gives direction to the model building.  Empirical analysis shows the presence of long memory in crude oil time series.  However, the intensity of the long-range dependence decreases over time. It is well  established that the classical BN-S model is not good for such data. In this paper, based on machine learning algorithms, we derive and implement a refined BN-S model to the crude oil price dynamics.

The organization of the paper is as follows. In Section \ref{sec2}, we briefly describe the BN-S model and why an improvement of this model is necessary for the analysis of crude oil price data. We find that the improvement of the model depends on machine learning analysis of the crude oil price data. The data analysis is provided in Section \ref{sec3}. A brief conclusion is provided in Section \ref{sec4}.

\section{An improved Barndorff-Nielsen and Shephard  model}
\label{sec2}

Many models in recent literature try to capture the stochastic behavior of time series. For example, in the case of the BN-S model, the stock or commodity price $S= (S_t)_{t \geq 0}$ on some filtered probability space $(\Omega, \mathcal{F}, (\mathcal{F}_t)_{0 \leq t \leq T}, \mathbb{P})$ is modeled by

\begin{equation}
\label{1}
S_t= S_0 \exp (X_t),
\end{equation}
\begin{equation}
\label{2}
dX_t = (\mu + \beta \sigma_t ^2 )\,dt + \sigma_t\, dW_t + \rho \,dZ_{\lambda t}, 
\end{equation}
\begin{equation}
\label{3}
d\sigma_t ^2 = -\lambda \sigma_t^2 \,dt + dZ_{\lambda t}, \quad \sigma_0^2 >0,
\end{equation}
where the parameters $\mu, \beta, \rho, \lambda \in \mathbb{R}$ with $\lambda >0$ and $\rho \leq 0$ and $r$ is the risk-free interest rate where a stock or commodity is traded up to a fixed horizon date $T$.  In this model $W_t$ is a Brownian motion and the process $Z_{t}$ is a subordinator. Also $W_t$ and $Z_t$ are assumed to be independent and $(\mathcal{F}_t)$ is assumed to be the usual augmentation of the filtration generated by the pair $(W_t, Z_t)$. 

However, the empirical data suggest that volatility ($\sigma_t$) usually fails to respond immediately to the sudden fluctuation of a stock or commodity price. The issue of the market's delayed response was raised in several papers (see \cite{BaT, Booth, Grinblatt}). Paper \cite{Arr} deals this issue with a delayed option price formula where the volatility has the form $\sigma(S_{t-b})$, for some delay parameter $b>0$.  

However, the results and the theoretical framework are far from satisfactory.  There are problems related to the above model:

\begin{enumerate}
\item Empirical results show that the jumps in volatility and stock or commodity price are positively correlated. However, unlike what is suggested by the model, they may not occur at the same time.  
\item For empirical data, the delay parameter $b$  is not deterministic. 
\item The performance of the model varies considerably depending both on the length and the density of time in the observed time series. Slow convergence is essentially caused by high serial correlation between the latent variables and the parameters. The problem is particularly acute in the case of a sparsely observed time series, or any case in which the time series contains many data.
\item The BN-S model does not incorporate the \emph{long range dependence} property. The model fails significantly for a longer range of time. In some occasions,  even for time spans as small as  two weeks, the model is unable to consistently capture the essential features of the related time series. 
\end{enumerate}

Some of these problems are addressed in various recent works. For example, in \cite{ijtaf}, the author presents a generalized version of the BN-S model. Assuming 
$Z_{t}$ and $Z_{t}^*$ to be two \emph{independent} L\'evy subordinators, define
\begin{equation}
\label{0}
d\tilde{Z}_{\lambda t}= \rho' \, dZ_{\lambda t} + \sqrt{1-\rho'^2}\, dZ_{\lambda t}^*,
\end{equation}
which is also a L\'evy subordinator provided $0 \leq \rho' \leq 1$. Thus, for $0 \leq \rho' \leq 1$, $Z$ and $\tilde{Z}$ are positively correlated L\'evy subordinators. Suppose the dynamics of $S_t$ are given by (\ref{1}), (\ref{2}), where $\sigma_t$ is given by
\begin{equation}
\label{4}
d\sigma_t ^2 = -\lambda \sigma_t^2 \,dt + d\tilde{Z}_{\lambda t}, \quad \sigma_0^2 >0,
\end{equation}
where $\tilde{Z}= (\tilde{Z}_{\lambda t})$ is a subordinator independent of $W$ but has a positive correlation with $Z$ as described above. Assume that the dynamics of $S= (S_t)$ is given by \eqref{1}, \eqref{2} and \eqref{4}. In \cite{ijtaf}, it is shown that this generalized model has the liberty to fit the option price and volatility in a correlated but different way, which is not possible for the case of the classical BN-S model.  This result is used for pricing vanilla options and developing theorems for parameter estimations of some particular variance processes.

The literature (see \cite{BN1, Semere}) shows that superpositions of Ornstein-Uhlenbeck (OU) type processes can be used to achieve \emph{long range dependence}. A limiting procedure creates processes that are self-similar with stationary increments. However, paper \cite{BN-S1} warns against fitting a large quantity of OU processes via a formal likelihood-based method. An alternative approach is to use heavy-tailed jump distributions in the model.

In this paper, we will address issues \#2, \#3 , and \#4 described above.  We will show that for crude oil price dynamics, the jump is \emph{not} completely stochastic. On the contrary, there is a \emph{deterministic} element in crude oil price that can be implemented to apply the existing models for an extended period of time. We will show from an empirical analysis that the dynamics of $X_t$ in \eqref{2} can be more accurately written when we use a convex combination of two independent subordinators, $Z$ and $Z^{(b)}$ as:

\begin{equation}
\label{2new}
dX_t = (\mu + \beta \sigma_t ^2 )\,dt + \sigma_t\, dW_t + \rho\left( (1-\theta) \,dZ_{\lambda t}+ \theta dZ^{(b)}_{\lambda t}\right), 
\end{equation}
where $\theta \in [0,1]$ is a \emph{deterministic} parameter. We will use several machine learning algorithms to determine the value of $\theta$. The process $Z^{(b)}$ in \eqref{2new} is a subordinator that has greater intensity than the subordinator $Z$. In \eqref{2new}, $\lambda>0$ is a scale parameter for the time. The subordinator $Z^{(b)}$, that has greater intensity than $Z$, corresponds to a greater L\'evy density subordinator. For instance, if the L\'evy densities of $Z$ and $Z^{(b)}$ are given by $\nu_1 \alpha e^{-\alpha x}$ and $\nu_2 \alpha e^{-\alpha x}$, respectively (for $\alpha, \nu_1, \nu_2>0$, and $x>0$), then $\nu_2>\nu_1$.

In this case \eqref{4} will be given by
\begin{equation}
\label{4new} 
d\sigma_t ^2 = -\lambda \sigma_t^2 \,dt + (1- \theta') dZ_{\lambda t} + \theta' dZ^{(b)}_{\lambda t} , \quad \sigma_0^2 >0,
\end{equation}
where, as before, $\theta' \in [0,1]$ is \emph{deterministic}. For simplicity, we assume $\theta= \theta'$ for the rest of this paper.

\begin{theorem}
\label{big11}
If the jump measure associated with the subordinator $Z$ be $J_Z$, and $J(s)= \int_0^s \int_{\mathbb{R}^+} J_Z(\lambda d\tau, dy)$, then for the log-return of the classical BN-S model given by \eqref{1}, \eqref{2}, and \eqref{3},
\begin{equation}
\label{corrBNS}
\text{Corr}(X_t, X_s)= \frac{\int_0^s \sigma_{\tau}^2 d\tau + \rho^2 J(s)}{ \sqrt{\left(\int_0^t \sigma_{\tau}^2 d\tau + t\rho^2 \lambda \text{Var}(Z_1)\right) \left(\int_0^s \sigma_{\tau}^2 d\tau + s\rho^2 \lambda \text{Var}(Z_1)\right)}},
\end{equation}
for $t>s$. 
\end{theorem}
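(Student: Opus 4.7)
The plan is to compute numerator and denominator separately from the integrated form of $X$, exploiting the independence $W \perp Z$ and the stationary independent-increments structure of the subordinator $Z$.

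First I would integrate \eqref{2} to obtain
$$X_t = X_0 + \mu t + \beta\!\int_0^t\!\sigma_\tau^2\,d\tau + \int_0^t\!\sigma_\tau\,dW_\tau + \rho\,Z_{\lambda t},$$
and similarly for $X_s$ with $s<t$. The deterministic drift $\mu t$ drops out of every covariance. By bilinearity, $\text{Cov}(X_t,X_s)$ splits into contributions from the integrated-variance term, the Brownian stochastic integral, and the jump term, together with cross terms. Conditioning on the filtration $\mathcal{F}^Z$ generated by $Z$ makes $\sigma$ deterministic and annihilates every cross term involving the Brownian integral, since its conditional mean is zero and its conditional covariance with any $\mathcal{F}^Z$-measurable quantity vanishes.

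Next I would evaluate each surviving piece. For the Brownian contribution, It\^o isometry applied conditionally on $\mathcal{F}^Z$ gives
$$\text{Cov}\!\left(\int_0^t\!\sigma_\tau\,dW_\tau,\;\int_0^s\!\sigma_\tau\,dW_\tau\,\bigg|\,\mathcal{F}^Z\right) = \int_0^s\!\sigma_\tau^2\,d\tau,$$
matching the first term in the numerator of \eqref{corrBNS}. For the jump contribution, writing $Z_{\lambda t} = Z_{\lambda s} + (Z_{\lambda t}-Z_{\lambda s})$ and using stationarity and independence of increments yields $\text{Cov}(Z_{\lambda t},Z_{\lambda s}) = \text{Var}(Z_{\lambda s}) = \lambda s\,\text{Var}(Z_1)$; combined with the L\'evy--It\^o representation this gives $\text{Var}(\rho Z_{\lambda t}) = t\rho^2 \lambda\,\text{Var}(Z_1)$, matching the denominator. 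For the pathwise counterpart entering the numerator, the jump covariation of $\rho Z_{\lambda\cdot}$ on $[0,s]$ is $\rho^2\sum_{\tau\le\lambda s}(\Delta Z_\tau)^2 = \rho^2\!\int_0^s\!\int_{\mathbb{R}^+}\!y^2\,J_Z(\lambda d\tau, dy) = \rho^2 J(s)$ after identifying $J$ with this quadratic variation.

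Assembling the numerator $\int_0^s\!\sigma_\tau^2\,d\tau + \rho^2 J(s)$ and the denominator
$$\sqrt{\left(\int_0^t\!\sigma_\tau^2\,d\tau + t\rho^2\lambda\,\text{Var}(Z_1)\right)\!\left(\int_0^s\!\sigma_\tau^2\,d\tau + s\rho^2\lambda\,\text{Var}(Z_1)\right)}$$
then yields \eqref{corrBNS}. The hard part will be reconciling the mixed conditional/unconditional flavor of the statement: the numerator records pathwise objects ($\int\!\sigma_\tau^2\,d\tau$ and $J(s)$) whereas the denominator records the unconditional second moment $t\rho^2\lambda\,\text{Var}(Z_1)$, and the contribution of the $\beta$-drift $\beta\!\int\!\sigma_\tau^2\,d\tau$ (which is $\mathcal{F}^Z$-measurable and correlated with $Z_{\lambda t}$ through $\sigma$) has been suppressed. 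The delicate step is therefore to fix the interpretation of $\text{Corr}$---either as a conditional correlation given $\mathcal{F}^Z$, or under the tacit simplification $\beta=0$---so that the $\beta$-drift and the cross covariances between the integrated variance and the jump process need not be tracked explicitly. Once that interpretation is pinned down, the three pieces above combine directly into \eqref{corrBNS}.
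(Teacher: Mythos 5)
Your plan follows essentially the same route as the paper's own (two-line) proof: the paper simply asserts $\text{Cov}(X_t,X_s)=\int_0^s\sigma_\tau^2\,d\tau+\rho^2\int_0^s\int_{\mathbb{R}^+}J_Z(\lambda d\tau,dy)$ and then divides by the ``instantaneous variance'' $(\sigma_t^2+\rho^2\lambda\,\text{Var}(Z_1))\,dt$, which is exactly your Brownian-isometry-plus-jump-contribution decomposition. In fact you are more careful than the paper, since you flag explicitly the issues it silently glosses over (the suppressed $\beta$-drift, the mixed pathwise/unconditional interpretation, and the identification of $J(s)$ with a squared-jump quantity rather than the unweighted integral of $J_Z$ as literally defined).
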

\begin{proof}
Clearly, for $t>s$,
$$\text{Cov}(X_t, X_s)= \int_0^s \sigma_{\tau}^2 d\tau + \rho^2 \int_0^s \int_{\mathbb{R}^+} J_Z(\lambda d\tau, dy).$$
Note that the instantaneous variance of the log-return is given by $(\sigma_t^2 + \rho^2 \lambda \text{Var}(Z_1))\,dt$. Consequently we obtain \eqref{corrBNS}. 
\end{proof}

Note that for a fixed $s$, if $t$ increases, then $\text{Corr}(X_t, X_s)$ quickly decreases. The proof of the following result is very similar to the proof of Theorem \ref{big11}.

\begin{theorem}
\label{big12}
If the jump measures associated with the subordinators $Z$ and $Z^{(b)}$ are $J_Z$ and $J^{(b)}_Z$ respectively, and $J(s)= \int_0^s \int_{\mathbb{R}^+} J_Z(\lambda d\tau, dy)$, $J^{(b)}(s)= \int_0^s \int_{\mathbb{R}^+} J^{(b)}_Z(\lambda d\tau, dy)$; then for the log-return of the refined BN-S model given by \eqref{1}, \eqref{2new}, and \eqref{4new},
\begin{align}
\label{corrBNSimproved}
\text{Corr}(X_t, X_s)= \frac{\int_0^s \sigma_{\tau}^2 d\tau + \rho^2  (1-\theta)^2 J(s) + \rho^2 \theta^2  J^{(b)}(s)}{ \sqrt{\alpha(t) \alpha(s)}},
\end{align}
for $t>s$, where
$\alpha(\nu) = \int_0^{\nu} \sigma_{\tau}^2 d\tau + \nu\rho^2 \lambda ( (1-\theta)^2 \text{Var}(Z_1)+  \theta^2 \text{Var}(Z^{(b)}_1)) $.
\end{theorem}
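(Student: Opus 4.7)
The plan is to mirror the proof of Theorem \ref{big11} verbatim, exploiting the crucial additional fact that $Z$ and $Z^{(b)}$ are \emph{independent} L\'evy subordinators, so that their contributions to both variances and covariances decouple and add linearly.

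First I would decompose $X_t$ from \eqref{2new} into its drift, Brownian, and jump pieces. The drift of finite variation contributes nothing to the covariance through the same conditioning argument used in Theorem \ref{big11}. For $t>s$, the Brownian piece contributes $\int_0^s\sigma_\tau^2\,d\tau$ by the It\^o isometry applied conditionally on the subordinator filtration. For the jump piece, I would expand
\[
\text{Cov}\bigl(\rho((1-\theta)Z_{\lambda t}+\theta Z^{(b)}_{\lambda t}),\,\rho((1-\theta)Z_{\lambda s}+\theta Z^{(b)}_{\lambda s})\bigr)
\]
bilinearly; independence of $Z$ and $Z^{(b)}$ kills the cross terms, and applying the jump-measure identity invoked in Theorem \ref{big11} separately to each subordinator leaves $\rho^2(1-\theta)^2 J(s)+\rho^2\theta^2 J^{(b)}(s)$. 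Thus
\[
\text{Cov}(X_t,X_s)=\int_0^s\sigma_\tau^2\,d\tau+\rho^2(1-\theta)^2 J(s)+\rho^2\theta^2 J^{(b)}(s).
\]

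For the denominator, I would compute the instantaneous variance of $dX_t$. Because $W$, $Z$ and $Z^{(b)}$ are mutually independent, the instantaneous variances add, giving
\[
\bigl(\sigma_t^2+\rho^2\lambda\bigl((1-\theta)^2\text{Var}(Z_1)+\theta^2\text{Var}(Z^{(b)}_1)\bigr)\bigr)\,dt,
\]
so integrating yields $\text{Var}(X_\nu)=\alpha(\nu)$ in the notation of the statement. Dividing $\text{Cov}(X_t,X_s)$ by $\sqrt{\alpha(t)\alpha(s)}$ produces \eqref{corrBNSimproved}.

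There is no serious technical obstacle; the proof is essentially bilinearity combined with independence. The only point to watch carefully is that the cross-subordinator terms vanish by independence, so the mixed coefficient $2\rho^2\theta(1-\theta)$ one might naively expect does \emph{not} appear: only the pure $(1-\theta)^2$ and $\theta^2$ weights survive, and these multiply possibly different variances $\text{Var}(Z_1)$ and $\text{Var}(Z^{(b)}_1)$, which must be kept distinct throughout.
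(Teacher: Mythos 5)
Your proposal is correct and follows exactly the route the paper intends: the paper gives no separate proof of Theorem \ref{big12}, stating only that it is ``very similar to the proof of Theorem \ref{big11},'' and your argument is precisely that adaptation --- the same covariance computation with the jump term expanded bilinearly, the cross terms killed by the independence of $Z$ and $Z^{(b)}$, and the instantaneous variance augmented to $\bigl(\sigma_t^2+\rho^2\lambda((1-\theta)^2\text{Var}(Z_1)+\theta^2\text{Var}(Z^{(b)}_1))\bigr)\,dt$. Your remark that no mixed $2\rho^2\theta(1-\theta)$ term survives is exactly the point that makes the stated $\alpha(\nu)$ correct.
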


\begin{proof}
We observe for $t>s$,
$$\text{Cov}(X_t, X_s)= \int_0^s \sigma_{\tau}^2 d\tau + \rho^2  (1-\theta)^2 \int_0^s \int_{\mathbb{R}^+} J_Z(\lambda d\tau, dy) + \rho^2 \theta^2  \int_0^s \int_{\mathbb{R}^+} J^{(b)}_Z(\lambda d\tau, dy).$$
Also, the variance of the log-returns $X_t$ and $X_s$ are given by $\int_0^{t} \sigma_{\tau}^2 d\tau + \nu\rho^2 \lambda ( (1-\theta)^2 \text{Var}(Z_1)+  \theta^2 \text{Var}(Z^{(b)}_1))$, and $\int_0^{s} \sigma_{\tau}^2 d\tau + \nu\rho^2 \lambda ( (1-\theta)^2 \text{Var}(Z_1)+  \theta^2 \text{Var}(Z^{(b)}_1)) $, respectively. Consequently we obtain \eqref{corrBNSimproved}. 
\end{proof}

Note that as $\theta$ is constantly adjusted, for a fixed $s$, the value of $t$ always has an upper limit. Consequently, $\text{Corr}(X_t, X_s)$ never becomes ``too small". This is the major difference between the results in Theorem \ref{big11} and Theorem \ref{big12}.

The advantages of the dynamics given by the refined BN-S model given by \eqref{1}, \eqref{2new}, and \eqref{4new}, over the existing models are significant. First of all, this minor change in the model incorporates \emph{long range dependence} without actually changing the model. This model will be more efficient, but at the same time have many fewer parameters than the \emph{superposition} models.  Secondly, the performance of this model for a sparsely observed time series will be improved. Thirdly, an estimation the delay parameter $b$ (mentioned in \#2) can be obtained. Finally, and possibly most importantly, the procedure and the model show the application of data science for extracting a \emph{deterministic component} out of processes that are thus far considered to be completely stochastic. For this paper, we restrict our analysis for crude oil price dynamics. However, this method possibly can be implemented for any compatible time series.

\section{Data analysis}
\label{sec3}

A commodity of fundamental importance is the crude oil. Consequently an analysis of the dynamics of crude oil price time series seems to be crucial. This allows to ascertain the potential impacts of its shocks in several economies and on other financial assets (see \cite{Tabak}). As observed in \cite{Sensoy}, long-range dependence is evident in various energy futures markets.  Empirical analysis shows the presence of long memory in crude oil time series.  However, the intensity of the long-range dependence decreases over time. As described in the beginning of Section \ref{sec2}, the classical BN-S model is not good for such data. On the other hand, Theorem \ref{big12} shows that the refined BN-S model proposed in this paper can be implemented in this case. 

We consider crude oil price data over a period of 10 years. We use the West Texas Intermediate (WTI or NYMEX) crude oil prices data set for the period June 1, 2009 to May 30, 2019 (Figure 1). There are a total of $2,530$ available data in this set. For convenience, we index the dates (for available data)  from 0 (for June 1, 2009) to 2529 (for  May 30, 2019). The following table (Table 1) summarizes various estimates for the data set.

\begin{table}[H]
\centering
\caption{Properties of the empirical data set.}
  \begin{tabular}{ | l | c | r |}
    \hline
    & Daily Price Change & Daily Price Change \% \\ \hline
    Mean & -0.0047 & 0.01370 \% \\ \hline
    Median & 0.04399 &  0.06521 \%\\     \hline
Maximum & 7.62 &  12.32 \%\\     \hline
Minimum & -8.90 &  -10.53 \%\\     \hline
  \end{tabular}
\end{table}

\begin{figure}[H]
\centering
\caption{Crude oil close price.}
\includegraphics[scale=.6]{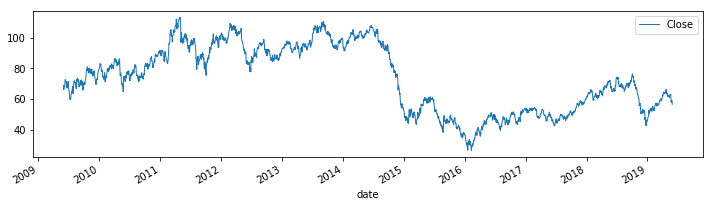}
\end{figure}

We implement the following \emph{procedure} (Step 1 through Step 5) that creates a \emph{classification problem} for the data set. For the data set:
\begin{enumerate}
\item We conduct exploratory data analysis. 
\item We consider the daily \emph{close price} for the historical oil price data. From the plots we identify a value of $K$ to define a ``big jump" in the crude oil close price. We identify the dates for which the close price is $K$ ``points" less than the close price of the previous day (for example, if $K=1\%$, we will find the dates for which the close price is $1\%$ below the previous business day). 
\item We create a new data-frame from the old one where ``features" (columns) will be seven consecutive close prices. For example, if the close prices are $$a_1,a_2,a_3,a_4,a_5,a_6,a_7,a_8,a_9,a_{10},\cdots;$$ then the first row of the data set will contain $$a_1,a_2,a_3,a_4,a_5,a_6,a_7;$$ second row of the data set will contain $$a_2,a_3,a_4,a_5,a_6,a_7,a_8;$$  etc. 
\item We create a new target column for the new data-frame (as created in the preceding step) as follows: $\theta=1$ for those set of seven close prices that immediately precede \emph{at least two jumps} of size $K$ (or more) in the following seven days. Otherwise we label the target column by $\theta=0$.

 For example: suppose we identified $a_{8}$ and $a_{10}$ as ``big jumps". Then the $\theta=1$  for the first row $a_1,a_2,a_3,a_4,a_5,a_6,a_7$.
\item We run various \emph{classification algorithms} from machine learning where the input is the \emph{close price for seven consecutive days}, and output is $\theta$-value (0 or 1). 
We evaluate the classification report and confusion matrix in each case. 
\end{enumerate}

We will show that we can find $\theta$ with reasonable accuracy and use this for \eqref{2new}. The result can be improved by adjusting the value of $K$ in Step 2. The result can be further improved by increasing the number of days (in Step 3) from seven to a higher number. It is worth noting that the various deep learning models provide a value of $\theta$ between $0$ and $1$. In Step 4, we approximate that by $0$ or $1$. However, the actual value of $\theta$ may be directly used in \eqref{2new}.

Figures 2, 3, and 4 provide various visualizations of crude oil close prices. Figures 5 and 6 provide a histogram of the daily price change and  a histogram of daily percentage change, respectively. We partition this data set in various ways. For each partition we use a \emph{train-test-split}, with respect to a given date.  We summarize the list of figures.

\begin{itemize}
\item[Figure 1:] West Texas Intermediate (WTI or NYMEX) crude oil prices data set for the period June 1, 2009 to May 30, 2019 (crude oil close price).
\item[Figure 2:] Yearly boxplot for the close oil price.
\item[Figure 3:] Distribution plot for close oil price.
\item[Figure 4:] Bar chart for close oil price.
\item[Figure 5:] Histogram for daily change in close oil price.
\item[Figure 6:] Histogram for daily change percentage in close oil price.
\end{itemize}

\begin{figure}[H]
\centering
\caption{Yearly boxplot for the close oil price.}
\includegraphics[scale=.5]{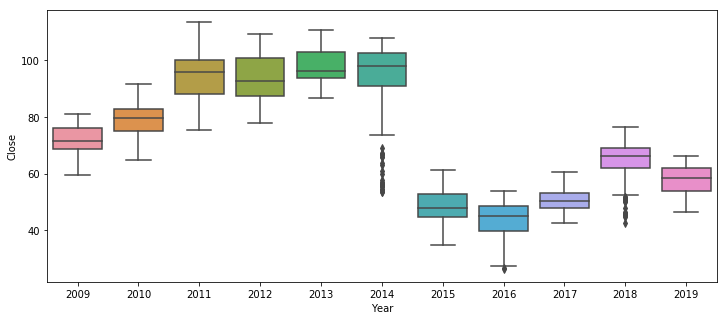}
\end{figure}
\begin{figure}[H]
\centering
\caption{Distribution plot for close oil price.}
\includegraphics[scale=.5]{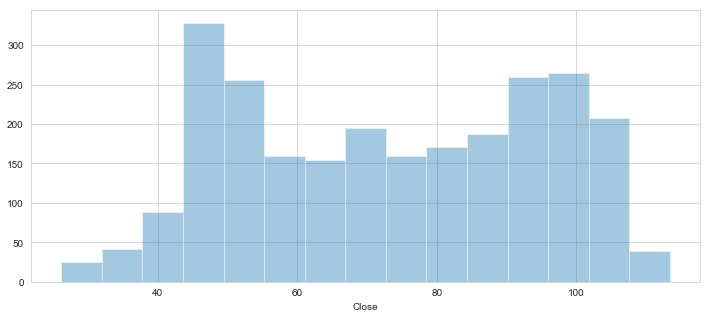}
\end{figure}
\begin{figure}[H]
\centering
\caption{Bar chart for close oil price.}
\includegraphics[scale=.9]{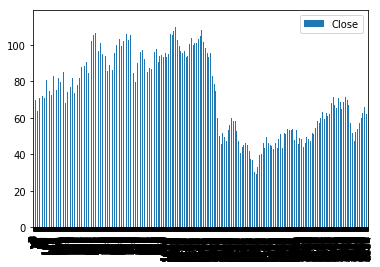}
\end{figure}

\begin{figure}[H]
\centering
\caption{Histogram for daily change in close oil price.}
\includegraphics[scale=.9]{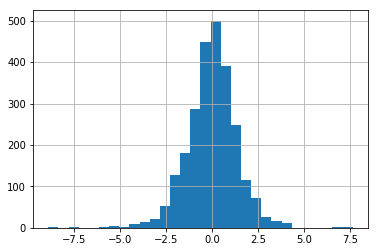}
\end{figure}
\begin{figure}[H]
\centering
\caption{Histogram for daily change percentage in close oil price.}
\includegraphics[scale=.9]{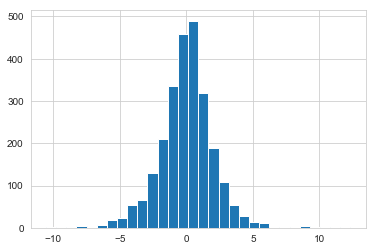}
\end{figure}

For the following analysis we use $K= 2\%$, i.e., $\theta=1$ for the set of seven close prices that immediately precede \emph{at least two jumps} of size $2\%$ (or more) in the following seven days. Otherwise, we use $\theta=0$. 

We run various \emph{supervised learning algorithms} on the crude oil price data. We begin with the \emph{logistic regression} (LR) and the \emph{random forest} (RF) classification of the data set. It is well known that for the \emph{logistic regression} classification, given a testing data $X$, $\mathbb{P}(\theta =1| X) = \frac{1}{1+ e^{-\beta_0 - \beta_1 \cdot X}}$, where the quantity $\beta_0$ and the vector $\beta_1$ are determined from the training set with the help of an appropriate log-likelihood function. The \emph{random forest} classification of many decision trees with a random sample of features is used. By randomly leaving out candidate features from each split, \emph{random forest} \emph{decorrelates} the trees, such that the averaging process can reduce the variance of the resulting model.

After that, we implement various \emph{deep learning} techniques:

\begin{enumerate}
\item[(A)] A \emph{neural network with two hidden layers} (with activations $\tanh$ and $\text{ReLU}$) and an output layer (with a $\text{softmax}$ activation function). For simplicity we approximate $\theta$ in \eqref{2new} with 0 (``no big jump") and 1 (``big jump"). For this approximation, we take $\theta=1$ if the output probability for the $\text{softmax}$ activation function corresponding to $\theta=1$ is more than $0.3$.
\item[(B)] \emph{Long short-term memory} (LSTM) along with the neural network described in (A).  LSTM is an artificial recurrent neural network (RNN) architecture that is implemented to avoid the \emph{vanishing gradient problem}. The \emph{vanishing gradient problem} is especially prominent when a vanilla RNN, constructed from regular neural network nodes, is implemented to model dependencies between time series values that are separated by a significant number of days. LSTM has in-built feedback connections that make it appropriately implementable for a financial time series. A common LSTM unit is composed of a cell, an input gate, an output gate, and a forget gate. The cell retains values over arbitrary time intervals and the other three gates regulate the flow of information into and out of the cell.
\item[(C)] \emph{LSTM along with a batch normalizer} (BN) and the neural network described in (A). A \emph{batch normalizer} standardizes and rescales the output of a given layer in the deep network.  To increase the stability of a neural network, batch normalization normalizes the output of a previous activation layer by subtracting the batch mean and dividing by the batch standard deviation. It also reduces the amount the hidden unit values shift around (i.e., its \emph{covariance shift}). This process centers all the inputs around zero. This way, there is not much change in each layer input. Consequently, layers in the network can learn from the back-propagation simultaneously, without waiting for the previous layer to learn. This speeds up the training of networks.
\end{enumerate}

Note that, once the value of $\theta$ is obtained from the training data, we use this value for the refined BN-S model (in \eqref{2new} and \eqref{4new}). In particular, we use this \emph{deterministic} $\theta$ value for the testing data. In addition to that, this \emph{deterministic} value of $\theta$ can be used for prediction using the refined BN-S model. 

For the following tables (Table 2 through Table 13), we provide classification reports for various machine learning algorithms. For the testing data, \emph{true positive}, \emph{true negative}, \emph{false positive}, and \emph{false negative} are denoted as TP, TN, FP, and FN, respectively. In the context of this study, ``TP" and ``TN" are the cases when the model correctly predicts $\theta=1$, and $\theta=0$, respectively. Also, in the context of this study, ``FP" is the case when $\theta=0$ is predicted as $\theta=1$; and ``FN" is the case when $\theta=1$ is predicted as $\theta=0$. The following measurements are standard:

$$\textit{precision}= \frac{\text{TP}}{\text{TP}+ \text{FP}},$$

$$\textit{recall}= \frac{\text{TP}}{\text{TP}+ \text{FN}}.$$
The \emph{f1-score} gives the harmonic mean of precision and recall. The scores corresponding to every class gives the accuracy of the classifier in classifying the data points in that particular class compared to all other classes. The \emph{support} is the number of samples of the true response that lie in that class.

\begin{table}[H]
\caption{Various estimations for \emph{training date(index)}: October 21, 2009 (100) to May 24, 2011 (500); and \emph{testing date(index)}: May 25, 2011 (501) to October 14, 2011 (600). }
  \begin{tabular}{ | c | c | c |c|c|c|}
   \hline
    & LR & RF & Neural Network (A) & LSTM (B) & BN (C) \\ \hline
    precision $\theta=0$ & 0.56 &  0.57 & 0.56 & 0.62 &  0.56 \\ \hline
    recall $\theta=0$ & 0.96 & 0.91  & 0.70 & 0.79  & 0.65 \\     \hline
f1-score $\theta=0$ & 0.71 & 0.70  & 0.62 &0.69  & 0.60 \\     \hline
support $\theta=0$ & 57 & 57  & 57 & 57 & 57 \\     \hline
    precision $\theta=1$ & 0.00 & 0.50 & 0.41 & 0.57& 0.43 \\ \hline
    recall $\theta=1$ & 0.00 & 0.11  & 0.27 & 0.36 & 0.34 \\     \hline
f1-score $\theta=1$ & 0.00 & 0.19  & 0.33 & 0.44 & 0.38 \\     \hline
support $\theta=1$ & 44 & 44 & 44 & 44& 44 \\     \hline
\end{tabular}
\end{table}

\begin{table}[H]
\caption{Various estimations for \emph{training date(index)}: : October 21, 2009 (100) to October 14, 2011 (600); and \emph{testing date(index)}: October 17, 2011 (601) to August 1, 2012 (800). }
  \begin{tabular}{ | c | c | c |c|c|c|}
   \hline
    & LR & RF & Neural Network (A) & LSTM (B) & BN (C) \\ \hline
    precision $\theta=0$ &  0.83 &  0.83 & 0.83 & 0.83 &  0.81\\ \hline
    recall $\theta=0$ &  0.99 & 0.91  &0.54 & 0.62 & 0.62 \\     \hline
f1-score $\theta=0$ & 0.91 & 0.87  & 0.65 & 0.71 & 0.70 \\     \hline
support $\theta=0$ & 168 & 168  &168 & 168 & 168 \\     \hline
    precision $\theta=1$ & 0.00  & 0.12 & 0.15& 0.16 &  0.11\\ \hline
    recall $\theta=1$ & 0.00 & 0.06  & 0.42& 0.36 & 0.24 \\     \hline
f1-score $\theta=1$ & 0.00 & 0.08  & 0.23& 0.22 & 0.15 \\     \hline
support $\theta=1$ & 33 & 33 & 33& 33 & 33\\     \hline
\end{tabular}
\end{table}

\begin{table}[H]
\caption{Various estimations for \emph{training  date(index)}: August 9, 2010 (300) to August 1, 2012 (800); and \emph{testing date(index)}: August 2, 2012 (801) to May 17, 2013 (1000). }
  \begin{tabular}{ | c | c | c |c|c|c|}
   \hline
    & LR & RF & Neural Network (A) & LSTM (B) & BN (C) \\ \hline
    precision $\theta=0$ &  0.92 &  0.92 & 0.91 & 0.91 &  0.92 \\ \hline
    recall $\theta=0$ &  1.00 & 0.92  & 0.58 & 0.58 & 0.58 \\     \hline
f1-score $\theta=0$ & 0.96 & 0.92  & 0.71 & 0.71  & 0.71 \\     \hline
support $\theta=0$ &  185& 185  & 185 & 185 & 185\\     \hline
    precision $\theta=1$ & 0.00 &  0.07 &  0.06 & 0.07 &  0.07 \\ \hline
    recall $\theta=1$ & 0.00  & 0.06  & 0.31 & 0.38 & 0.38 \\     \hline
f1-score $\theta=1$ & 0.00 & 0.06  & 0.10 & 0.12 & 0.12 \\     \hline
support $\theta=1$ & 16 & 16 & 16& 16 & 16\\     \hline
\end{tabular}
\end{table}

\begin{table}[H]
\caption{Various estimations for \emph{training  date(index)}: May 17, 2013 (1000) to December 17, 2014 (1400); and \emph{testing date(index)}: December 18, 2014 (1401) to May 13, 2015 (1500). }
  \begin{tabular}{ | c | c | c |c|c|c|}
   \hline
    & LR & RF & Neural Network (A) & LSTM (B) & BN (C) \\ \hline
    precision $\theta=0$ &  0.48 & 0.48 & 0.47 & 0.48 &  0.50 \\ \hline
    recall $\theta=0$ &  1.00 & 1.00  & 0.96 &1.00  & 0.98 \\     \hline
f1-score $\theta=0$ &0.64 & 0.65  & 0.63 &0.65 & 0.66 \\     \hline
support $\theta=0$ & 48 & 48  & 48 & 48 & 48 \\     \hline
    precision $\theta=1$ & 0.00 &  1.00&  0.50 & 1.00 & 0.86  \\ \hline
    recall $\theta=1$ & 0.00 &  0.02 & 0.04 & 0.04 & 0.11 \\     \hline
f1-score $\theta=1$ & 0.00  &  0.04 & 0.07 & 0.07 & 0.20 \\     \hline
support $\theta=1$ & 53 & 53  & 53 & 53 & 53\\     \hline
\end{tabular}
\end{table}

\begin{table}[H]
\caption{Various estimations for \emph{training  date(index)}: March 5, 2014 (1200) to May 13, 2015 (1500); and \emph{testing date(index)}: May 14, 2015 (1501) to October 5, 2015 (1600). }
  \begin{tabular}{ | c | c | c |c|c|c|}
   \hline
    & LR & RF & Neural Network (A) & LSTM (B) & BN (C) \\ \hline
    precision $\theta=0$ &  0.45 &  0.48 & 0.52 & 0.43 & 0.48 \\ \hline
    recall $\theta=0$ &  0.96 & 0.94  & 0.62 & 0.45 &0.83 \\     \hline
f1-score $\theta=0$ & 0.62 & 0.64  & 0.56  &0.44 & 0.61 \\     \hline
support $\theta=0$ & 47 & 47   & 47 & 47 & 47\\     \hline
    precision $\theta=1$ & 0.00 & 0.70 & 0.60 & 0.50 &  0.60 \\ \hline
    recall $\theta=1$ & 0.00  &  0.13 & 0.50 & 0.48 & 0.22 \\     \hline
f1-score $\theta=1$ & 0.00  &  0.22 & 0.55 & 0.49 & 0.32\\     \hline
support $\theta=1$ & 54 & 54  & 54 & 54 & 54\\     \hline
\end{tabular}
\end{table}

\begin{table}[H]
\caption{Various estimations for \emph{training  date(index)}: July 28, 2014 (1300) to February 29, 2016 (1700); and \emph{testing  date(index)}: March 1, 2016 (1701) to November 29, 2016 (1900). }
  \begin{tabular}{ | c | c | c |c|c|c|}
   \hline
    & LR & RF & Neural Network (A) & LSTM (B) & BN (C) \\ \hline
    precision $\theta=0$ &0.56  & 0.51 & 0.54 & 0.61 &  0.50 \\ \hline
    recall $\theta=0$ &  0.17 & 0.59  & 0.12 & 0.10 & 0.06 \\     \hline
f1-score $\theta=0$ & 0.26 &  0.55 & 0.20 & 0.17 & 0.11\\     \hline
support $\theta=0$ & 114 & 114  & 114 & 114 & 114\\     \hline
    precision $\theta=1$ &  0.43 & 0.33 &  0.43 & 0.44 & 0.43  \\ \hline
    recall $\theta=1$ & 0.83 &  0.26 &  0.86 & 0.92& 0.92\\     \hline
f1-score $\theta=1$ & 0.57 &  0.29 & 0.57 & 0.59 & 0.58\\     \hline
support $\theta=1$ & 87 & 87  & 87 & 87 & 87\\     \hline
\end{tabular}
\end{table}

\begin{table}[H]
\caption{Various estimations for \emph{training  date(index)}: July 28, 2014 (1300) to July 12, 2016 (1800); and \emph{testing  date(index)}: July 13, 2016 (1801) to April 21, 2017 (2000). }
  \begin{tabular}{ | c | c | c |c|c|c|}
   \hline
    & LR & RF & Neural Network (A) & LSTM (B) & BN (C) \\ \hline
    precision $\theta=0$ & 0.64  &  0.66& 0.73 & 0.69&  0.69 \\ \hline
    recall $\theta=0$ & 0.60  & 0.71  & 0.26  & 0.25 & 0.18 \\     \hline
f1-score $\theta=0$ & 0.62 & 0.68 & 0.38 & 0.37 & 0.29 \\     \hline
support $\theta=0$ & 136  & 136   & 136 & 136 & 136\\     \hline
    precision $\theta=1$ & 0.26  & 0.26  & 0.34 & 0.33 & 0.33 \\ \hline
    recall $\theta=1$ & 0.29 &  0.22 & 0.80 & 0.77& 0.83 \\     \hline
f1-score $\theta=1$ & 0.28 &  0.24 & 0.48 &0.46 & 0.47\\     \hline
support $\theta=1$ & 65 & 65  & 65 & 65 & 65\\     \hline
\end{tabular}
\end{table}

\begin{table}[H]
\caption{Various estimations for \emph{training date(index)}: May 13, 2015 (1500) to April 21, 2017 (2000); and \emph{testing  date(index)}: April 24, 2017 (2001) to September 8, 2017 (2100). }
  \begin{tabular}{ | c | c | c |c|c|c|}
   \hline
    & LR & RF & Neural Network (A) & LSTM (B) & BN (C) \\ \hline
    precision $\theta=0$ & 0.75  &  0.77 & 0.65 & 0.81 &  0.72 \\ \hline
    recall $\theta=0$ & 1.00  & 0.82  & 0.22 & 0.38  & 0.34\\     \hline
f1-score $\theta=0$ & 0.86 & 0.79  & 0.33  & 0.52 & 0.46\\     \hline
support $\theta=0$ & 76 & 76   & 76 & 76 & 76 \\     \hline
    precision $\theta=1$ &  0.00 &  0.30 & 0.21  & 0.28& 0.23  \\ \hline
    recall $\theta=1$ & 0.00 & 0.24  &  0.64 & 0.72 & 0.60\\     \hline
f1-score $\theta=1$ & 0.00 & 0.27  & 0.32 & 0.40 & 0.33\\     \hline
support $\theta=1$ & 25 & 25  & 25 & 25 & 25\\     \hline
\end{tabular}
\end{table}

\begin{table}[H]
\caption{Various estimations for \emph{training date(index)}: October 5, 2015 (1600) to September 8, 2017 (2100); and \emph{testing date(index)}: September 11, 2017 (2101) to February 1, 2018 (2200). }
  \begin{tabular}{ | c | c | c |c|c|c|}
   \hline
    & LR & RF & Neural Network (A) & LSTM (B) & BN (C) \\ \hline
    precision $\theta=0$ &  0.92  &  0.93& 0.95 & 0.92 &  0.93 \\ \hline
    recall $\theta=0$ &  1.00 &  0.96 &0.39 & 0.76 & 0.67 \\     \hline
f1-score $\theta=0$ & 0.96 & 0.94 & 0.55 & 0.84 & 0.78\\     \hline
support $\theta=0$ & 93 & 93   & 93 & 93 & 93\\     \hline
    precision $\theta=1$ &  0.00&  0.20& 0.10 & 0.08 & 0.09 \\ \hline
    recall $\theta=1$ & 0.00 &  0.12 & 0.75 & 0.25 & 0.38 \\     \hline
f1-score $\theta=1$ & 0.00 &  0.15 & 0.17 & 0.12 & 0.14\\     \hline
support $\theta=1$ & 8  & 8 & 8 & 8 & 8\\     \hline
\end{tabular}
\end{table}

\begin{table}[H]
\caption{Various estimations for \emph{training date(index)}: February 29, 2016 (1700) to February 1, 2018 (2200); and \emph{testing date(index)}: February 2, 2018 (2201) to June 26, 2018 (2300). }
  \begin{tabular}{ | c | c | c |c|c|c|}
   \hline
    & LR & RF & Neural Network (A) & LSTM (B) & BN (C) \\ \hline
    precision $\theta=0$ &  0.94 &  0.93& 0.96 & 0.95&  0.94 \\ \hline
    recall $\theta=0$ & 1.00  &  0.84 &0.67 & 0.56 & 0.67\\     \hline
f1-score $\theta=0$ & 0.97 & 0.88 & 0.79 & 0.70 & 0.79\\     \hline
support $\theta=0$ &  95 & 95  & 95 & 95 & 95\\     \hline
    precision $\theta=1$ & 0.00 &  0.00 & 0.09 & 0.07&  0.06\\ \hline
    recall $\theta=1$ & 0.00 &  0.00 & 0.50 & 0.50 & 0.33\\     \hline
f1-score $\theta=1$ & 0.00 & 0.00  & 0.15 & 0.12 & 0.10\\     \hline
support $\theta=1$ & 6 & 6 & 6& 6&6\\     \hline
\end{tabular}
\end{table}

\begin{table}[H]
\caption{Various estimations for \emph{training date(index)}: July 12, 2016 (1800) to June 26, 2018 (2300); and \emph{testing date(index)}: June 27, 2018 (2301) to November 14, 2018 (2400). }
  \begin{tabular}{ | c | c | c |c|c|c|}
   \hline
    & LR & RF & Neural Network (A) & LSTM (B) & BN (C) \\ \hline
    precision $\theta=0$ &  0.74  & 0.76 & 0.75 & 0.76& 0.78 \\ \hline
    recall $\theta=0$ &  1.00  &  0.99 & 0.99 & 0.87 & 0.79 \\     \hline
f1-score $\theta=0$ & 0.85 & 0.86 & 0.85 & 0.81 & 0.78 \\     \hline
support $\theta=0$ &  75 & 75  & 75 & 75 & 75\\     \hline
    precision $\theta=1$ &  0.00 &  0.67& 0.50 & 0.38 & 0.36 \\ \hline
    recall $\theta=1$ & 0.00 &  0.08 & 0.04 & 0.23 & 0.35 \\     \hline
f1-score $\theta=1$ &  0.00 & 0.14  & 0.07 & 0.29 & 0.35 \\     \hline
support $\theta=1$ &  26 & 26  & 26 & 26 & 26\\     \hline
\end{tabular}
\end{table}

\begin{table}[H]
\caption{Various estimations for \emph{training date(index)}: July 12, 2016 (1800) to  June 26, 2018 (2300); and \emph{testing date(index)}: June 27, 2018 (2301) to April 10, 2019 (2500). }
  \begin{tabular}{ | c | c | c |c|c|c|}
   \hline
    & LR & RF & Neural Network (A) & LSTM (B) & BN (C) \\ \hline
    precision, $\theta=0$ & 0.77  & 0.78  &  0.77& 0.79 &  0.83 \\ \hline
    recall $\theta=0$ & 1.00  & 0.96  & 0.92 & 0.92 & 0.75 \\     \hline
f1-score $\theta=0$ &0.87  & 0.86 & 0.84 & 0.85 & 0.79\\     \hline
support $\theta=0$ & 154 &  154 & 154 & 154 & 154\\     \hline
    precision, $\theta=1$ &  0.00 & 0.45 & 0.32 & 0.43 &  0.38 \\ \hline
    recall $\theta=1$ & 0.00 &  0.11 & 0.13 & 0.21 & 0.49 \\     \hline
f1-score $\theta=1$ & 0.00 & 0.17  & 0.18 & 0.29 & 0.43 \\     \hline
support $\theta=1$ & 47 & 47  & 47 & 47 & 47\\     \hline
\end{tabular}
\end{table}

To make the BN-S model implementable for a long range, it is clear that a single L\'evy subordinator is not effective. If a large fluctuation in the future can be apprehended from the historical data (i.e., $\theta=1$) with the help of machine learning algorithms, we can ``switch" the initial L\'evy subordinator ($Z$) to the more intense L\'evy subordinator ($Z^{(b)}$) that corresponds to larger fluctuations. On the other hand if no big fluctuation in the future can be apprehended from the historical data (i.e., $\theta=0$) with the help of machine learning algorithms, we can ``switch" the L\'evy subordinator $Z^{(b)}$ to $Z$. In this way, a single equation \eqref{2new} can be used to describe the crude oil dynamics even for a longer time period. 

It is clear from the various tables that the \emph{logistic regression} is less efficient in detecting future big jumps ($\theta =1$) based on the historical data. For most of the cases the neural network technique (A), LSTM (B), or the LSTM with a batch normalizer (C), work better than the random forest classifier. Also, if the algorithms are trained on more data points, the predictions for $\theta=1$ are better. To keep the model simple, only two hidden layers are used. The results improve if the number of hidden layers is increased. Also, note that the softmax activation function in the output layers for (A), (B), or (C), in fact provides probabilities for $\theta=0$ and $\theta=1$. With appropriate scaling those probabilities can be used in lieu of $(1-\theta)$ and $\theta$ in \eqref{2new}. 

Once we have a good estimation of the value of $\theta$, we can implement that to \eqref{2new}. That would lead to one of two options: (1) if the initial description of the BN-S dynamics incorporates $Z$ (or $Z^{(b)}$) as the L\'evy subordinator and $\theta=0$ is established, we continue (or, update) the subordinator with $Z$; (2)  if the initial description of the BN-S dynamics incorporates $Z$ (or $Z^{(b)}$) as the L\'evy subordinator and $\theta=1$ is established, we update (or, continue) the subordinator with $Z^{(b)}$.  The machine learning algorithms can be performed dynamically in order to continue or update with the background driving L\'evy process in the BN-S model. 

As a result, the analysis shows that for crude oil price dynamics, the jump is \emph{not} completely stochastic. There is a \emph{deterministic} element ($\theta$) in it that can be implemented to apply the existing models for an extended period of time. Thus the new model incorporates long term dependence without changing the tractability of the model. This model is more efficient, but at the same time has many fewer parameters than the \emph{superposition} models.  

\section{Conclusion}
\label{sec4}

We observe that a classical BN-S model may not appropriately represent crude oil price dynamics. In this paper, we implement various machine learning algorithms to determine the possibility of an upcoming large fluctuation in the crude oil price. Once those \emph{possibilities} are obtained, the classical BN-S model is modified (or not, depending on the obtained \emph{possibilities}) with respect to its background driving L\'evy subordinator. This modification enables \emph{long range dependence} in the new model without significantly changing the model. Also, this modification incorporates only one extra parameter (i.e., $\theta$)  compared to the classical model. It is shown in this paper that the parameter $\theta$ is \emph{deterministic} and can be obtained from the empirical data using various machine learning techniques. 

In this paper we implement machine learning algorithms to the empirical data in order to improve the mathematical model for commodity price dynamics. In a sequel of this work, we plan to implement this analysis for other financial time series. Also, we observe that the stochastic equation related to the volatility dynamics does not play a crucial role in the present analysis. The situation will be different and improved if it can be appropriately analyzed for an empirical data set. \\

\textbf{Acknowledgment}: The authors would like to thank the anonymous reviewers for their careful reading of the manuscript and for suggesting points to improve the quality of the paper.

\end{document}